\theoremstyle{definition}
\newtheorem{theorem}{Theorem}
\newtheorem{lemma}{Lemma}
\newtheorem{corollary}{Corollary}
\newtheorem{example}{Example}
\newtheorem{remark}{Remark}
\begin{document}
\title{Estimating the Schmidt numbers of quantum states via symmetric measurements}
\author{Hao-Fan Wang}
\email{2230502117@cnu.edu.cn}
\affiliation{School of Mathematical Sciences, Capital Normal University, Beijing 100048, China}
\author{Shao-Ming Fei}
\email{feishm@cnu.edu.cn}
\affiliation{School of Mathematical Sciences, Capital Normal University, Beijing 100048, China}
\begin{abstract}
The Schmidt numbers quantify the entanglement degree of quantum states. Quantum states with high Schmidt numbers provide a larger advantage in various quantum information processing tasks compared to quantum states with low Schmidt numbers. We derive a Schmidt number criterion based on the trace norm of the correlation matrix obtained from symmetric measurements. We show that our result is more effective than and superior to existing Schmidt number criteria by detailed examples.
\end{abstract}

\keywords{Schmidt number, Symmetric measurements, Quantum entanglement, Correlation matrix}


\maketitle

\section{Introduction}
Entanglement is a fundamental resource for quantum information science and one of the key dividing factors between the quantum and the classical worlds. However, in practical applications, Neither determining whether a given state is entangled nor describing the entanglement degree of a given entangled state is an easy task.\par

A bipartite pure state $\ket{\psi}$ is called separable, if it is a product state $\ket{\psi}=\ket{u}\otimes\ket{v}$, otherwise it is entangled. More generally, a bipartite mixed state is separable if it can be written as a convex combination of pure product states,
\begin{equation*}
	\rho=\sum\limits_{i}p_{i}\ket{u_{i}}\bra{u_{i}}\otimes\ket{v_{i}}\bra{v_{i}}
\end{equation*}
where the $p_{i}$ form a probability distribution, so they are positive and sum up to one. A state that cannot be written in the above form is called entangled.\par

An important issue in the theory of quantum entanglement is the quantification and estimation of entanglement for composite systems. The Schmidt number is a well-known entanglement quantifier for bipartite states \cite{Terhal}, which introduces a hierarchy of entangled states and shows that a quantum state is separable if and only if its Schmidt number is one \cite{Sperling,Chruscinski}. Bipartite quantum states with higher Schmidt numbers are generally considered superior to those with lower Schmidt numbers in various information processing tasks. For example, it has been shown in Ref.\cite{Bae} that quantum states with higher Schmidt numbers have advantages in improving channel discrimination probability. \par

Despite the important advantages of entanglement states with higher Schmidt numbers in quantum information processing, a fundamental challenge is how to effectively detect the Schmidt number of a given quantum state. The first Schmidt number criterion is obtained by examining the fidelity between the quantum state and the maximally entangled states \cite{Terhal}. Then in Refs.\cite{Hulpke,Johnston} the authors presented a Schmidt number criterion by generalizing the well-known positive partial transpose (PPT) \cite{PPT1,PPT2} and the computable cross-norm or realignment criterion (CCNR) \cite{CCNR1,CCNR2} criteria. Later, the Schmidt number criteria based on Bloch decomposition \cite{Klockl} and covariance matrix \cite{Liu1,Liu2} have been derived. In addition, a witness-based method has also been
developed to detect the Schmidt number of a state \cite{Sanpera,Wyderka,Shi}. Recently, the authors in Ref.\cite{Tavakoli} proposed two elegant criteria to detect the Schmidt number based on the correlation matrix obtained from symmetric informationally complete measure (SIC POVM) and from mutually unbiased bases (MUBs). The results are generalized to the one based on general SIC POVM (GSIC POVM) \cite{ZWang}, as a generalization of the entanglement criterion given in Ref.\cite{Lai}.

It is well-known that mutually unbiased measurements (MUMs) \cite{Kalev1} and GSIC POVM \cite{Kalev2} are the natural extensions of MUBs and SIC POVM, respectively. In 2022, the symmetric measurement or $(N,M)$-POVM has been proposed \cite{Siudzinska}, which includes MUMs and GSIC POVM as special cases. In fact, as the generalizations of symmetric measurement, the concepts of generalized symmetric measurements \cite{Siudzinska11} and generalized equiangular measurements \cite{Siudzinska12} have been proposed recently.

In this work, we derive a criterion of Schmidt number detection based on the trace norm of the correlation matrix whose entries are obtained via symmetric measurements. It can be considered as a generalization of the GSIC criterion \cite{ZWang} and the MUB criterion \cite{Tavakoli}. We show that our criterion is more efficient in detecting the Schmidt number than some existing criteria by detailed examples. Moreover, from the proof of our criterion of Schmidt number, we also obtain a class of lower bounds of concurrence induced by symmetric measurements, which generalizes the results given in Ref.\cite{Haofan}.

\section{Preliminaries}
We first recall the definition and some properties of $(N,M)$-POVM \cite{Siudzinska}. A set of $N$ $d$-dimensional POVMs $\{E_{\alpha,k}|k=1,2\cdots,M\}$ ($\alpha=1,2,\cdots,N$) constitute an $(N,M)$-POVM if
\begin{flalign*}
	{\rm tr}(E_{\alpha,k}) &= \dfrac{d}{M}, \\
	{\rm tr}(E_{\alpha,k}^{2}) &= x,\\
	{\rm tr}(E_{\alpha,k}E_{\alpha,l}) &= \dfrac{d-Mx}{M(M-1)},~~ l\neq k\\
	{\rm tr}(E_{\alpha,k}E_{\beta,l}) &= \dfrac{d}{M^{2}},~~ \beta\neq\alpha
\end{flalign*}
where the parameter $x$ satisfies $\dfrac{d}{M^{2}}<x\leq \min\left\{\dfrac{d^{2}}{M^{2}},\dfrac{d}{M}\right\}$. When $N(M-1)=d^{2}-1$, the $(N,M)$-POVM is called an informationally complete $(N,M)$-POVM.
For any finite dimension $d$ ($d>2$), there exist at least four different types of informationally complete $(N,M)$-POVM: (1) $N=1$ and $M=d^{2}$ (GSIC POVM), (2) $N=d+1$ and $M=d$ (MUMs), (3) $N=d^{2}-1$ and $M=2$, (4) $N=d-1$ and $M=d+2$.

From orthonormal Hermitian operator basis $\{G_{0}=I_{d}/\sqrt{d},\,G_{\alpha,k}|\alpha=1,\cdots,N;\,k=1,\cdots,M-1\}$ with ${\rm tr}(G_{\alpha,k})=0$, an informationally complete $(N,M)$-POVM is given by
\begin{equation*}
	E_{\alpha,k}=\dfrac{1}{M}I_{d}+tH_{\alpha,k},
\end{equation*}
where
\begin{equation*}
	H_{\alpha,k}=\begin{cases}
		G_{\alpha}-\sqrt{M}(\sqrt{M}+1)G_{\alpha,k},~ k=1,\cdots,M-1\\
		(\sqrt{M}+1)G_{\alpha}, k=M
	\end{cases}
\end{equation*}
with $G_{\alpha}=\sum\limits_{k=1}^{M-1}G_{\alpha,k}$. The parameter $t$ should be chosen such that $E_{\alpha,k}\geq 0$, which is equivalent to
\begin{equation*}
	-\dfrac{1}{M}\dfrac{1}{\lambda_{\max}}\leq t \leq \dfrac{1}{M}\dfrac{1}{|\lambda_{\min}|},
\end{equation*}
where $\lambda_{\max}$ and $\lambda_{\min}$ are the minimal and maximal eigenvalue
from among all eigenvalues of $H_{\alpha,k}$, respectively. The parameters $t$ and $x$ satisfy the following relation,
\begin{equation*}
	x=\dfrac{d}{M^{2}}+t^{2}(M-1)(\sqrt{M}+1)^{2}.
\end{equation*}

Next, we recall the definition of Schmidt number. A bipartite pure state $\ket{\psi}\in \mathcal{H}_{A}\otimes\mathcal{H}_{B}$ has a Schmidt decomposition $ \ket{\psi}=\sum\limits_{i=1}^{r} \lambda_i \ket{e_{i}}\otimes\ket{f_{i}}$, where $\lambda_i>0$ and  $\sum\limits_{i=1}^{r}\lambda_i^2=1$, $\{\ket{e_i}\}$ and $\{\ket{f_i}\}$ are the orthonormal bases in $\mathcal{H}_{A}$ and $\mathcal{H}_{B}$, respectively. The number $r$ is called the Schmidt rank of $\ket{\psi}$, denoted as ${\rm SR}(\ket{\psi})$\cite{Ekert}. The Schmidt number of a bipartite mixed state $\rho$ in $\mathcal{H}_{A}\otimes\mathcal{H}_{B}$ is defined as \cite{Terhal}
\begin{equation*}
	{\rm SN}(\rho) = \min\limits_{\{p_{i},\ket{\psi_{i}}\}}\max\limits_{i}{\rm SR}(\ket{\psi_{i}}),
\end{equation*}
where the minimization goes over all possible pure state decompositions of $\rho=\sum_i p_i \ket{\psi_i}\bra{\psi_i}$. Obviously, the Schmidt numbers of a pure state is its Schmidt rank.

\section{Symmetric measurement based Schmidt number criterion}
To derive Schmidt number criterion for symmetric measurements, we first need to prove the following lemma, which generalizes the Lemma 1 in Ref.\cite{Siudzinska} to non-Hermitian matrices.

\begin{lemma}
	Let $\{E_{\alpha,k}|\alpha=1,\cdots,N;\,k=1,\cdots,M\}$ be an informationally complete $(N, M)$-POVM on $d$ dimensional Hilbert space $\mathcal{H}$ with free parameter $x$. Then for any linear operator $\sigma\in L(\mathcal{H})$, we have
	\begin{equation*}\label{lemma1}
		\sum_{\alpha=1}^{N}\sum_{k=1}^{M}|{\rm tr}(E_{\alpha,k}\sigma)|^{2}=\dfrac{d(M^{2}x-d){\rm tr}(\sigma\sigma^{\dagger})+(d^{3}-M^{2}x)|{\rm tr}(\sigma)|^{2}}{dM(M-1)}.
	\end{equation*}
\end{lemma}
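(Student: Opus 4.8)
The strategy is to expand $\sigma$ in the orthonormal Hermitian operator basis $\{G_0 = I_d/\sqrt{d}, G_{\alpha,k}\}$ and push everything through the known second-moment data of the $(N,M)$-POVM. Since $\{G_0, G_{\alpha,k}\}$ is an orthonormal basis of $L(\mathcal{H})$ with respect to the Hilbert–Schmidt inner product, write $\sigma = c_0 G_0 + \sum_{\alpha,k} c_{\alpha,k} G_{\alpha,k}$ with $c_0 = {\rm tr}(\sigma^\dagger G_0)^* = {\rm tr}(G_0\sigma)$ and $c_{\alpha,k} = {\rm tr}(G_{\alpha,k}\sigma)$ — note these coefficients are complex because $\sigma$ need not be Hermitian, which is precisely the generalization over Lemma 1 of Ref.~\cite{Siudzinska}. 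Then ${\rm tr}(\sigma\sigma^\dagger) = |c_0|^2 + \sum_{\alpha,k}|c_{\alpha,k}|^2$ and ${\rm tr}(\sigma) = \sqrt{d}\,c_0$. The whole computation reduces to expressing $\sum_{\alpha,k}|{\rm tr}(E_{\alpha,k}\sigma)|^2$ as a quadratic form in the $c$'s and matching coefficients.

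\smallskip

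First I would compute ${\rm tr}(E_{\alpha,k}\sigma)$ using $E_{\alpha,k} = \frac{1}{M}I_d + tH_{\alpha,k}$ and the explicit form of $H_{\alpha,k}$ in terms of $G_\alpha = \sum_{l=1}^{M-1}G_{\alpha,l}$ and $G_{\alpha,k}$. This gives, for $k \le M-1$, ${\rm tr}(E_{\alpha,k}\sigma) = \frac{1}{M}{\rm tr}(\sigma) + t\big(\sum_{l=1}^{M-1}c_{\alpha,l} - \sqrt{M}(\sqrt{M}+1)c_{\alpha,k}\big)$, and for $k = M$, ${\rm tr}(E_{\alpha,M}\sigma) = \frac{1}{M}{\rm tr}(\sigma) + t(\sqrt{M}+1)\sum_{l=1}^{M-1}c_{\alpha,l}$. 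Next I would form $|{\rm tr}(E_{\alpha,k}\sigma)|^2$ and sum over $k$ from $1$ to $M$ for fixed $\alpha$. The cross terms between $\frac{1}{M}{\rm tr}(\sigma)$ and the $tH_{\alpha,k}$ part vanish after summing over $k$, because $\sum_{k=1}^{M}H_{\alpha,k} = 0$ (equivalently ${\rm tr}(H_{\alpha,k})$ summed gives zero and the $G$'s are traceless); this is the key simplification. What remains is $\frac{1}{M}|{\rm tr}(\sigma)|^2$ plus $t^2$ times a Gram-type sum $\sum_{k=1}^{M}|L_{\alpha,k}|^2$ where $L_{\alpha,k}$ is the linear combination of $\{c_{\alpha,l}\}_{l=1}^{M-1}$ read off from $H_{\alpha,k}$.

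\smallskip

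The main technical obstacle is evaluating $\sum_{k=1}^{M}|L_{\alpha,k}|^2$ and showing it collapses to a multiple of $\sum_{l=1}^{M-1}|c_{\alpha,l}|^2$. I expect this to work out because the operators $\{H_{\alpha,k}\}_{k=1}^{M}$ are, up to the constant $t$, an overcomplete tight frame spanning the traceless part of the $\alpha$-block; concretely one computes $\sum_k H_{\alpha,k}\otimes H_{\alpha,k}$-type contractions using ${\rm tr}(H_{\alpha,k}H_{\alpha,l})$, which are encoded in the defining relations ${\rm tr}(E_{\alpha,k}^2)=x$, ${\rm tr}(E_{\alpha,k}E_{\alpha,l})=\frac{d-Mx}{M(M-1)}$. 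Rather than manipulating the $L_{\alpha,k}$ directly, a cleaner route is: $\sum_{k=1}^M |{\rm tr}(E_{\alpha,k}\sigma)|^2 = \sum_{k=1}^M {\rm tr}(E_{\alpha,k}\sigma)\,{\rm tr}(E_{\alpha,k}\sigma^\dagger)^* = \sum_{k=1}^M {\rm tr}\!\big((E_{\alpha,k}\otimes E_{\alpha,k})(\sigma\otimes\sigma^\dagger)\big)$ wait — more simply, $\overline{{\rm tr}(E_{\alpha,k}\sigma)} = {\rm tr}(E_{\alpha,k}\sigma^\dagger)$ since $E_{\alpha,k}$ is Hermitian, so $\sum_k |{\rm tr}(E_{\alpha,k}\sigma)|^2 = {\rm tr}\big(\sigma^\dagger (\sum_k {\rm tr}(E_{\alpha,k}\sigma^\dagger)^{-1}\cdots)\big)$; cleanest of all, use that $\sum_k E_{\alpha,k}\,{\rm tr}(E_{\alpha,k}\,\cdot\,)$ acts as a known linear map. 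I would compute the operator $\Phi_\alpha(X) := \sum_{k=1}^M {\rm tr}(E_{\alpha,k}X)E_{\alpha,k}$ on the basis elements $G_0, G_{\alpha,k}$ using the four moment relations, obtain $\Phi_\alpha = a\,\mathrm{id} + b\,{\rm tr}(\cdot)I_d$ on the relevant subspace with $a,b$ functions of $d,M,x$, and then $\sum_k |{\rm tr}(E_{\alpha,k}\sigma)|^2 = {\rm tr}(\sigma^\dagger \Phi_\alpha(\sigma))$. Summing over $\alpha = 1,\dots,N$ and using $N(M-1) = d^2-1$ (informational completeness) to replace $N$, then simplifying with the identity $x = \frac{d}{M^2} + t^2(M-1)(\sqrt M+1)^2$, should produce exactly $\frac{d(M^2x-d){\rm tr}(\sigma\sigma^\dagger) + (d^3-M^2x)|{\rm tr}(\sigma)|^2}{dM(M-1)}$. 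The only real care needed is bookkeeping the traceful versus traceless parts and confirming the coefficient of $|{\rm tr}(\sigma)|^2$ comes out with the stated sign and normalization; sanity-checking against $\sigma$ Hermitian recovers Lemma~1 of Ref.~\cite{Siudzinska}, and checking $\sigma = I_d$ gives $\sum_{\alpha,k}|{\rm tr}(E_{\alpha,k})|^2 = NM \cdot d^2/M^2 = Nd^2/M$, which must match the right-hand side — a quick consistency test I would run before trusting the general formula.
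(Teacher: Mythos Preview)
Your plan is correct and will go through, but it is organized differently from the paper's argument. The paper does not touch the construction $E_{\alpha,k}=\tfrac{1}{M}I_d+tH_{\alpha,k}$ at all; instead it writes down the explicit dual frame $F_{\alpha,k}=\tfrac{1}{x-y}\bigl(E_{\alpha,k}-A I_d\bigr)$ (with $y=\tfrac{d-Mx}{M(M-1)}$, $z=\tfrac{d}{M^2}$, $w=\tfrac{d}{M}$, $A=\tfrac{(N-1)z+y}{Nw}$), expands $\sigma=\sum_{\alpha,k}p_{\alpha,k}F_{\alpha,k}$, computes ${\rm tr}(\sigma\sigma^\dagger)$ using only the four moment relations, and then solves the resulting linear equation for $C=\sum_{\alpha,k}|p_{\alpha,k}|^2$. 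Your frame-operator route $\Phi(\sigma)=\sum_{\alpha,k}{\rm tr}(E_{\alpha,k}\sigma)E_{\alpha,k}$ together with $\sum_{\alpha,k}|p_{\alpha,k}|^2={\rm tr}(\sigma^\dagger\Phi(\sigma))$ is the mirror image of this: the paper's dual-frame identity $\sigma=\sum p_{\alpha,k}F_{\alpha,k}$ is equivalent to $\Phi(\sigma)=(x-y)\sigma+NA\,{\rm tr}(\sigma)I_d$, so both arguments encode the same linear algebra and use only the defining moment relations plus $N(M-1)=d^2-1$.

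Two small points to tighten in your write-up. First, your initial route via the explicit $H_{\alpha,k}$ proves the lemma only for POVMs arising from that construction, whereas the statement is for an arbitrary informationally complete $(N,M)$-POVM; the paper's (and your $\Phi$) route avoids this restriction by working purely from the moment relations. Second, it is $\Phi=\sum_\alpha\Phi_\alpha$, not each $\Phi_\alpha$ individually, that takes the clean form $a\,\mathrm{id}+b\,{\rm tr}(\cdot)I_d$ on all of $L(\mathcal{H})$; a single $\Phi_\alpha$ only behaves nicely on ${\rm span}\{I_d,G_{\alpha,1},\dots,G_{\alpha,M-1}\}$, so you should sum over $\alpha$ before reading off the coefficients.
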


\begin{proof}
Denote $w=\dfrac{d}{M}$, $y=\dfrac{d-Mx}{M(M-1)}$ and $z=\dfrac{d}{M^{2}}$. For any linear operator $\sigma\in L(\mathcal{H})$, we can verify that
	\begin{equation*}
		\sigma = \sum\limits_{\alpha=1}^{N}\sum\limits_{k=1}^{M}{\rm tr}(E_{\alpha,k}\sigma)F_{\alpha,k},
	\end{equation*}
    where $F_{\alpha,k}=\dfrac{1}{x-y}\left(E_{\alpha,k}-\dfrac{(N-1)z+y}{Nw}I_{d}\right)$. Define $p_{\alpha,k}={\rm tr}(E_{\alpha,k}\sigma)$. Then
    \begin{equation*}
    	{\rm tr}(\sigma\sigma^{\dagger})=\sum\limits_{\alpha,\beta=1}^{N}
    \sum\limits_{k,l=1}^{M}p_{\alpha,k}\overline{p_{\beta,l}}{\rm tr}(F_{\alpha,k}F_{\beta,l}).
    \end{equation*}
By definition it follows that
    \begin{equation*}
    	{\rm tr}(F_{\alpha,k}F_{\beta,l})=\dfrac{1}{(x-y)^{2}}\left[{\rm tr}(E_{\alpha,k}E_{\beta,l})+A(dA-2w)\right],
    \end{equation*}
    where $A=\dfrac{(N-1)z+y}{Nw}$. As a result, we have
    \begin{flalign*}
    	& \begin{array}{ll}
    		{\rm tr}(\sigma\sigma^{\dagger})
    		& = \dfrac{1}{(x-y)^{2}}\Bigg[\sum\limits_{\alpha=1}^{N}\sum\limits_{k=1}^{M}|p_{\alpha,k}|^{2}{\rm tr}(E_{\alpha,k}^2)+\sum\limits_{\alpha=1}^{N}\sum\limits_{\substack{k,l=1 \\ l\neq k}}^{M}
    		p_{\alpha,k}\overline{p_{\alpha,l}}{\rm tr}(E_{\alpha,k}E_{\alpha,l})\\
    		& \qquad\qquad\quad +\sum\limits_{\substack{\alpha,\beta=1 \\ \beta\neq\alpha}}^{N} \sum\limits_{k,l=1}^{M}p_{\alpha,k}\overline{p_{\beta,l}}{\rm tr}(E_{\alpha,k}E_{\beta,l})+N^{2}A(dA-2w)|{\rm tr}(\sigma)|^{2}\Bigg]\\
    		& = \dfrac{1}{(x-y)^{2}}\left[Cx+Ny|{\rm tr}(\sigma)|^{2}-Cy+N(N-1)z|{\rm tr}(\sigma)|^{2}+N^2A(dA-2w)|{\rm tr}(\sigma)|^{2}\right]\\
    		& = \dfrac{1}{(x-y)^{2}}\left[(x-y)C+N(y+(N-1)z)|{\rm tr}(\sigma)|^{2}+N^2A(dA-2w)|{\rm tr}(\sigma)|^{2}\right]\\
    		& = \dfrac{1}{(x-y)^{2}}\left[(x-y)C+\left(N+\dfrac{(dA-2w)N}{w}\right)(y+(N-1)z)|{\rm tr}(\sigma)|^{2}\right]\\
    		& =\dfrac{dM(M-1)C+(M^2x-d^{3})|{\rm tr}(\sigma)|^{2}}{d(M^2x-d)},
    	\end{array}&
    \end{flalign*}
    where $C=\sum\limits_{\alpha=1}^{N}\sum\limits_{k=1}^{M}|p_{\alpha,k}|^{2}$ and we have used $(M-1)N=d^{2}-1$ in the last equality. The proof of the lemma is completed by adjusting the above equation.
\end{proof}

Now, we can prove our main result following the methodology and notation in Ref.\cite{Tavakoli}.

\begin{theorem}
Let $\{E_{\alpha,k}^{A}|\alpha=1,\cdots,N_{A};\,k=1,\cdots,M_{A}\}$ be an informationally complete $(N_{A}, M_{A})$-POVM on Hilbert space $\mathcal{H}_{A}$ and $\{E_{\beta,l}^{B}|\beta=1,\cdots,N_{B};\,l=1,\cdots,M_{B}\}$ an informationally complete $(N_{B}, M_{B})$-POVM on Hilbert space $\mathcal{H}_{B}$. Denote $p_{\alpha,k;\beta,l}={\rm tr}\left(\rho_{AB}\left(E_{\alpha,k}^{A}\otimes E_{\beta,l}^{B}\right)\right)$ and $\mathcal{P}(\rho_{AB})=\left(p_{\alpha,k;\beta,l}\right)_{N_{A}M_{A}\times N_{B}M_{B}}$
for a bipartite state $\rho_{AB}$ in $\mathcal{H}_{A}\otimes\mathcal{H}_{B}$. If the Schmidt number of $\rho_{AB}$ is at most $r$, it holds that
	\begin{equation*}
		\|\mathcal{P}(\rho_{AB})\|_{\rm tr}\leq\dfrac{L}{K}+(r-1)\dfrac{R}{K},
	\end{equation*}
	where
	\begin{eqnarray}
		K&=&\sqrt{d_{A}d_{B}(M_{A}-1)(M_{B}-1)},\nonumber\\[2mm]
		L&=&\sqrt{\dfrac{(d_{A}-1)(d_{B}-1)(M_{A}^{2}x_{A}
+d_{A}^{2})(M_{B}^{2}x_{B}+d_{B}^{2})}{M_{A}M_{B}}},\nonumber\\[2mm]
		 R&=&\sqrt{\dfrac{d_{A}d_{B}(M_{A}^{2}x_{A}-d_{A})(M_{B}^{2}x_{B}-d_{B})}{M_{A}M_{B}}}.\nonumber
	\end{eqnarray}
\end{theorem}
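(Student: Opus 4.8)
The plan is to run the by-now standard two-stage argument for correlation-matrix criteria: first reduce to pure states of bounded Schmidt rank by convexity, then decompose the correlation matrix of such a pure state into rank-one blocks indexed by pairs of Schmidt vectors and estimate each block with Lemma~1. To begin, suppose $\mathrm{SN}(\rho_{AB})\le r$. By definition there is a pure-state decomposition $\rho_{AB}=\sum_i p_i\ket{\psi_i}\bra{\psi_i}$ with $\mathrm{SR}(\ket{\psi_i})\le r$ for every $i$. Since $\mathcal{P}$ is linear in its argument and $\|\cdot\|_{\rm tr}$ is a norm, $\|\mathcal{P}(\rho_{AB})\|_{\rm tr}\le\sum_i p_i\,\|\mathcal{P}(\ket{\psi_i}\bra{\psi_i})\|_{\rm tr}$, so it suffices to prove the bound for a single pure state $\ket{\psi}$ of Schmidt rank at most $r$.

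Next, write the Schmidt decomposition $\ket{\psi}=\sum_{m=1}^{r}\sqrt{\mu_m}\,\ket{e_m}\otimes\ket{f_m}$. Then $p_{\alpha,k;\beta,l}=\sum_{m,n=1}^{r}\sqrt{\mu_m\mu_n}\,\bra{e_m}E_{\alpha,k}^{A}\ket{e_n}\,\bra{f_m}E_{\beta,l}^{B}\ket{f_n}$, so upon introducing the vectors $a^{(m,n)}$ with entries $(a^{(m,n)})_{\alpha,k}=\bra{e_m}E_{\alpha,k}^{A}\ket{e_n}$ and $b^{(m,n)}$ with entries $(b^{(m,n)})_{\beta,l}=\bra{f_m}E_{\beta,l}^{B}\ket{f_n}$, the correlation matrix splits as $\mathcal{P}(\ket{\psi}\bra{\psi})=\sum_{m,n=1}^{r}\sqrt{\mu_m\mu_n}\,a^{(m,n)}\,(b^{(m,n)})^{T}$. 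Applying the triangle inequality for the trace norm together with $\|a\,b^{T}\|_{\rm tr}=\|a\|_2\,\|b\|_2$ for rank-one matrices gives $\|\mathcal{P}(\ket{\psi}\bra{\psi})\|_{\rm tr}\le\sum_{m,n=1}^{r}\sqrt{\mu_m\mu_n}\,\|a^{(m,n)}\|_2\,\|b^{(m,n)}\|_2$.

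Then I would evaluate the Euclidean norms $\|a^{(m,n)}\|_2^2=\sum_{\alpha,k}\bigl|\mathrm{tr}\bigl(E_{\alpha,k}^{A}\ket{e_n}\bra{e_m}\bigr)\bigr|^2$ using Lemma~1 with the non-Hermitian choice $\sigma=\ket{e_n}\bra{e_m}$, for which $\mathrm{tr}(\sigma\sigma^{\dagger})=1$ and $\mathrm{tr}(\sigma)=\delta_{mn}$. This yields $\|a^{(m,m)}\|_2^2=\frac{(d_A-1)(M_A^{2}x_A+d_A^{2})}{d_AM_A(M_A-1)}$ and $\|a^{(m,n)}\|_2^2=\frac{M_A^{2}x_A-d_A}{M_A(M_A-1)}$ for $m\neq n$, with the analogous formulas on system $B$. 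Separating the diagonal terms $m=n$ from the off-diagonal terms $m\neq n$, using $\sum_m\mu_m=1$, and bounding $\sum_{m\neq n}\sqrt{\mu_m\mu_n}=\bigl(\sum_{m=1}^{r}\sqrt{\mu_m}\bigr)^{2}-1\le r-1$ by Cauchy–Schwarz, the estimate collapses to $\|\mathcal{P}(\ket{\psi}\bra{\psi})\|_{\rm tr}\le\frac{L}{K}+(r-1)\frac{R}{K}$ once one verifies the two algebraic identities $\|a^{(m,m)}\|_2\,\|b^{(m,m)}\|_2=L/K$ and $\|a^{(m,n)}\|_2\,\|b^{(m,n)}\|_2=R/K$ for $m\neq n$. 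Combining this with the reduction of the first paragraph completes the argument.

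The routine ingredients are just the triangle inequality for $\|\cdot\|_{\rm tr}$ and Cauchy–Schwarz. The step demanding the most care is the bookkeeping around Lemma~1: inserting the non-Hermitian operator $\sigma=\ket{e_n}\bra{e_m}$ correctly (this is precisely why Lemma~1 was stated for general linear $\sigma$ rather than Hermitian $\sigma$), keeping straight which POVM parameters $(d_A,M_A,x_A)$ versus $(d_B,M_B,x_B)$ attach to which tensor factor, and checking that the products of the resulting norms reproduce exactly $L/K$ and $R/K$.
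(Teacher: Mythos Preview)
Your proposal is correct and follows essentially the same route as the paper's proof: reduce to pure states by convexity of the trace norm, expand $\mathcal{P}(\ket{\psi}\bra{\psi})$ as a sum of rank-one matrices indexed by pairs of Schmidt vectors, evaluate each rank-one piece via Lemma~1 applied to $\sigma=\ket{e_n}\bra{e_m}$, and finish with the triangle inequality together with $\bigl(\sum_m\sqrt{\mu_m}\bigr)^{2}\le r$. The only differences from the paper are cosmetic (your $\sqrt{\mu_m}$ and general Schmidt bases $\ket{e_m},\ket{f_m}$ versus the paper's $\lambda_s$ and $\ket{ss}$).
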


\begin{proof}
Since the trace norm is convex, without loss of generality, instead of mixed states we consider pure states with Schmidt rank $r$, $\ket{\psi}=\sum\limits_{s=1}^{r}\lambda_s\ket{ss}$, where $\{\lambda_s\}$ is the set of Schmidt coefficients with $\sum\limits_{s=1}^r\lambda_s^2=1$. As a result, we obtain
	\begin{flalign}
		& \begin{array}{ll}
			{\rm tr}\left(\ket{\psi}\bra{\psi}\left(E_{\alpha,k}^{A}\otimes E_{\beta,l}^{B}\right)\right)
		    & = \sum\limits_{s,t=1}^r\lambda_s\lambda_t\braket{ss|E_{\alpha,k}^{A}\otimes E_{\beta,l}^{B}|tt}\\
			& =\sum\limits_{s=1}^{r}\lambda_{s}^{2}\braket{ss|E_{\alpha,k}^{A}\otimes E_{\beta,l}^{B}|ss}+\sum\limits_{s\neq t}\lambda_s\lambda_t\braket{ss|E_{\alpha,k}^{A}\otimes E_{\beta,l}^{B}|tt}.
		\end{array}&\label{pab}
	\end{flalign}
Define $[D_s]_{\alpha,k;\beta,l}=\braket{ss|E_{\alpha,k}^{A}\otimes E_{\beta,l}^{B}|ss}$ and $[O_{s,t}]_{\alpha,k;\beta,l}=\braket{ss|E_{\alpha,k}^{A}\otimes E_{\beta,l}^{B}|tt}$. It then follows that
    \begin{eqnarray}
    	\|D_s\|_{\rm tr}&=&\sqrt{\sum\limits_{\alpha=1}^{N_{A}}\sum\limits_{k=1}^{M_{A}}
    \braket{s|E_{\alpha,k}|s}^{2}}\sqrt{\sum\limits_{\beta=1}^{N_{B}}
    \sum\limits_{l=1}^{M_{B}}\braket{s|E_{\beta,l}|s}^{2}},\nonumber\\
    	\|O_{s,t}\|_{\rm tr}&=&\sqrt{\sum\limits_{\alpha=1}^{N_{A}}\sum\limits_{k=1}^{M_{A}}
    |\braket{s|E_{\alpha,k}|t}|^{2}}\sqrt{\sum\limits_{\beta=1}^{N_{B}}
    \sum\limits_{l=1}^{M_{B}}|\braket{s|E_{\beta,l}|t}|^{2}},~~~s\neq t.\nonumber
    \end{eqnarray}
Using Lemma 1, we obtain
    \begin{eqnarray}
    	\|D_s\|_{\rm tr}&=&\sqrt{\dfrac{(d_{A}-1)(M_{A}^{2}x_{A}+d_{A}^{2})}{d_{A}M_{A}(M_{A}-1)}}
    \sqrt{\dfrac{(d_{B}-1)(M_{B}^{2}x_{B}+d_{B}^{2})}{d_{B}M_{B}(M_{B}-1)}},\nonumber\\
    	\|O_{s,t}\|_{\rm tr}&=&\sqrt{\dfrac{M_{A}^{2}x_{A}-d_{A}}{M_{A}(M_{A}-1)}}
    \sqrt{\dfrac{M_{B}^{2}x_{B}-d_{B}}{M_{B}(M_{B}-1)}},~~~s\neq t.\nonumber
    \end{eqnarray}
    From Eq. (\ref{pab}), we obtain
    \begin{flalign}
    	& \begin{array}{ll}
    		\|\mathcal{P}(\ket{\psi}\bra{\psi})\|_{\rm tr}
    		& \leq \sum\limits_{s=1}^{r}\lambda_{s}^{2}\|D_s\|_{\rm tr}+\sum\limits_{s\neq t}\lambda_s\lambda_t\|O_{s,t}\|_{\rm tr}\\
    		& =\dfrac{1}{K}\left[\left(\sum\limits_{s=1}^{r}\lambda_{s}^{2}\right)(L-R)+\left(\sum\limits_{s=1}^{r}\lambda_{s}\right)^{2}R\right]\\
    		& =\dfrac{L}{K}+\dfrac{R}{K}\left[\left(\sum\limits_{s=1}^{r}\lambda_{s}\right)^{2}-1\right].
    	\end{array}&\label{key}
    \end{flalign}
    The proof of Theorem 1 is now complete by using the fact that $\left(\sum\limits_{s=1}^{r}\lambda_{s}\right)^{2}\leq r$\cite{Terhal}.
\end{proof}

\begin{remark}
	If $\rho_{AB}$ is a separable, then ${\rm SN}(\rho_{AB})=1$. Consequently from Theorem 1 we have $\|\mathcal{P}(\rho_{AB})\|_{\rm tr}\leq\sqrt{\frac{(d_{A}-1)(M_{A}^{2}x_{A}+d_{A}^{2})}{d_{A}M_{A}(M_{A}-1)}}
	\sqrt{\frac{(d_{B}-1)(M_{B}^{2}x_{B}+d_{B}^{2})}{d_{B}M_{B}(M_{B}-1)}}$. Hence our Theorem 1 covers the entanglement criterion given in Ref. \cite{Siudzinska,Tang}.
\end{remark}

\begin{remark}
	Set $M_{A}=d_{A}^{2}$, $M_{B}=d_{B}^{2}$ and $N_{A}=N_{B}=1$ in Theorem 1. Then
	\begin{eqnarray}
		K&=&\sqrt{d_{A}d_{B}(d_{A}^{2}-1)(d_{B}^{2}-1)},\nonumber\\[2mm]
		L&=&\sqrt{(d_{A}-1)(d_{B}-1)(x_{A}d_{A}^{2}+1)(x_{B}d_{B}^{2}+1)},\nonumber\\[2mm]
		R&=&\sqrt{(x_{A}d_{A}^{3}-1)(x_{B}d_{B}^{3}-1)}.\nonumber
	\end{eqnarray}
	Hence, Theorem 1 generalizes the GSIC Schmidt number criterion proposed in Ref.\cite{ZWang}. Similarly, take $N_{A}-1=M_{A}=d_{A}$, $N_{B}-1=M_{B}=d_{B}$ and $x_{A}=x_{B}=1$. We obtain $K=\frac{1}{2}L=R=\sqrt{d_{A}d_{B}(d_{A}-1)(d_{B}-1)}$. Then $\|\mathcal{P}(\rho_{AB})\|_{\rm tr}\leq 1+r$. This indicates that Theorem 1 also generalizes the MUBs Schmidt number criterion given in \cite{Tavakoli}.
\end{remark}

\begin{corollary}
Let $\{E_{\alpha,k}|\alpha=1,\cdots,N;\,k=1,\cdots,M\}$ be an informationally complete $(N, M)$-POVM on Hilbert space $\mathcal{H}$. Denote $p_{\alpha,k;\beta,l}={\rm tr}\left(\rho\left(E_{\alpha,k}\otimes E_{\beta,l}\right)\right)$ and $\mathcal{P}(\rho)=\left(p_{\alpha,k;\beta,l}\right)_{d^{2}\times d^{2}}$ $\rho$ for a bipartite state in $\mathcal{H}\otimes\mathcal{H}$. If the Schmidt number of $\rho$ is at most $r$, it holds that
	\begin{equation*}
		\|\mathcal{P}(\rho)\|_{\rm tr}\leq \dfrac{(d-1)(M^{2}x+d^{2})}{dM(M-1)}+(r-1)\dfrac{M^{2}x-d}{M(M-1)}.
	\end{equation*}
\end{corollary}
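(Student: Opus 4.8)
The plan is to obtain this corollary as a direct specialization of Theorem~1 to the case where the two local systems coincide. Concretely, I would set $\mathcal{H}_A=\mathcal{H}_B=\mathcal{H}$, so that $d_A=d_B=d$, and take the same informationally complete $(N,M)$-POVM on both factors, i.e. $N_A=N_B=N$, $M_A=M_B=M$, and $x_A=x_B=x$. With these identifications the matrix $\mathcal{P}(\rho)=\left(p_{\alpha,k;\beta,l}\right)_{d^2\times d^2}$ of the corollary is exactly the matrix $\mathcal{P}(\rho_{AB})$ appearing in Theorem~1 (here using $N(M-1)=d^2-1$ so that $NM\le d^2$ and in the informationally complete case the index ranges match the stated $d^2\times d^2$ size up to the trivial $G_0$ component, which does not affect the POVM labelling).

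The second step is to evaluate the three constants $K$, $L$, $R$ of Theorem~1 under this specialization. I would compute
\begin{equation*}
K=\sqrt{d\cdot d\cdot (M-1)(M-1)}=d(M-1),
\end{equation*}
\begin{equation*}
L=\sqrt{\frac{(d-1)^2\,(M^2x+d^2)^2}{M^2}}=\frac{(d-1)(M^2x+d^2)}{M},
\end{equation*}
\begin{equation*}
R=\sqrt{\frac{d^2\,(M^2x-d)^2}{M^2}}=\frac{d(M^2x-d)}{M},
\end{equation*}
where in each case the quantity under the square root is a perfect square because the $A$ and $B$ data are identical, and all three expressions are manifestly nonnegative given the admissible range $\tfrac{d}{M^2}<x\le\min\{\tfrac{d^2}{M^2},\tfrac{d}{M}\}$ (in particular $M^2x-d>0$).

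The final step is to substitute these into the bound $\|\mathcal{P}(\rho)\|_{\rm tr}\le \tfrac{L}{K}+(r-1)\tfrac{R}{K}$ of Theorem~1 and simplify the two ratios:
\begin{equation*}
\frac{L}{K}=\frac{(d-1)(M^2x+d^2)}{dM(M-1)},\qquad
\frac{R}{K}=\frac{M^2x-d}{M(M-1)},
\end{equation*}
which yields precisely the claimed inequality. There is no genuine obstacle here beyond bookkeeping; the only point that warrants a line of care is confirming that the index set $\{(\alpha,k)\}$ of an informationally complete $(N,M)$-POVM indeed labels the $d^2$ columns/rows asserted (via $N(M-1)=d^2-1$ together with the $k$-sum constraint $\sum_k E_{\alpha,k}=I_d$), so that the matrix in the corollary is literally the matrix to which Theorem~1 applies.
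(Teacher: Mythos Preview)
Your proposal is correct and is exactly the paper's intended route: Corollary~1 is stated without its own proof precisely because it is the specialization $d_A=d_B=d$, $M_A=M_B=M$, $x_A=x_B=x$ of Theorem~1, and your evaluations of $K$, $L$, $R$ and the ratios $L/K$, $R/K$ are all correct. One small slip in your parenthetical remark: for an informationally complete $(N,M)$-POVM one has $NM=N(M-1)+N=d^2-1+N\ge d^2$, not $NM\le d^2$, so the ``$d^2\times d^2$'' in the corollary's statement is really $NM\times NM$ (a harmless labeling issue in the paper itself that does not affect the bound); the paper also sketches an alternative proof via the conical $2$-design identity, but your specialization argument is the primary one.
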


Actually, we can use another way to prove Corollary 1. Let $\ket{\psi}=\sum\limits_{i=1}^{r}\lambda_{i}\ket{ii}$ be a pure state in $\mathcal{H}\otimes\mathcal{H}$. Since $\left(\sum\limits_{i=1}^{r}\lambda_{i}\right)^{2}\leq r$ and $\sum\limits_{i=1}^{r}\lambda_{i}^{2}=1$, we have $r-1\geq 2\sum\limits_{i< j}\lambda_{i}\lambda_{j}$. Due to every informationally complete $(N, M)$-POVM is a conical 2-design \cite{Siudzinska2,Huang}, we can follow the method in Ref.\cite{Haofan} to obtain 
\begin{equation*}
	\dfrac{M(M-1)}{xM^{2}-d}\left(\|\mathcal{P}(\ket{\psi}\bra{\psi})\|_{\rm tr}-\dfrac{(d-1)(xM^{2}+d^{2})}{dM(M-1)}\right)=2\sum\limits_{i< j}\lambda_{i}\lambda_{j}\leq r-1.
\end{equation*}
This indicates that we have proven Corollary 1 in another way. It is easy to discover that this proof method is applicable to any conical 2-design. Therefore, this is very meaningful observation, which implies that we can propose new Schmidt number criterion based on other new conical 2-designs.

In fact, we can obtain a lower bound of concurrence based on the proof of Theorem 1. The concurrence of a bipartite state $\ket{\psi}$ is defined by
$C(\ket{\psi})=\sqrt{2(1-{\rm tr}(\rho_{A}^{2}))}$\cite{Rungta,Chen}, where $\rho_{A}={\rm tr}_{B}(\ket{\psi}\bra{\psi})$ is the reduced state obtained by tracing over the subsystem $B$. The concurrence of a mixed state $\rho_{AB}$ is given by the convex roof extension,
$C(\rho_{AB})=\min\limits_{\{p_{i},\ket{\psi_{i}}\}}\sum\limits_{i}p_{i}C(\ket{\psi_{i}})$,
where the minimum is taken over all possible pure state decompositions of $\rho_{AB} = \sum\limits_{i}p_{i}\ket{\psi_{i}}\bra{\psi_{i}}$, with $p_{i}\geq 0$ and $\sum\limits_{i}p_{i}=1$. Denote $d=\min\{d_{A},d_{B}\}$. According to Eq. (\ref{key}), we obtain
\begin{equation*}
	\frac{K}{R}\left(\|\mathcal{P}(\ket{\psi}\bra{\psi})\|_{\rm tr}-\frac{L}{K}\right)\leq \left(\sum\limits_{i}\lambda_{i}\right)^{2}-1=2\sum\limits_{i<j}\lambda_{i}\lambda_{j},
\end{equation*}
from which we can prove that
\begin{equation}\label{remark3}
	C(\rho_{AB})\geq \frac{K}{R}\sqrt{\frac{2}{d(d-1)}}\left(\|\mathcal{P}(\rho_{AB})\|_{\rm tr}-\frac{L}{K}\right)
\end{equation}
by using $C(\ket{\psi})\geq 2\sqrt{\dfrac{2}{d(d-1)}}\sum\limits_{i<j}\lambda_{i}\lambda_{j}$ \cite{Chen} and the convex property of the trace norm. When $\mathcal{H}_{A}=\mathcal{H}_{B}$ and $(N_{A},M_{A})=(N_{B},M_{B})$, (\ref{remark3}) reduces to the lower bound of concurrence given by Theorem 1 in Ref.\cite{Haofan}.

\section{Examples}
Let us consider several examples to illustrate our conclusions.
\begin{example}
	Consider the following $2\otimes 4$ state,
	$\rho(\tau,q)=q\ket{\xi}\bra{\xi}+(1-q)\rho_{\tau}$, where $\rho_{\tau}$ is the bound entangled state proposed by Horodecki \cite{state},
	\begin{equation*}
		\rho_{\tau}=\dfrac{1}{1+7\tau}
		\begin{pmatrix}
			\tau & 0 & 0 & 0 & 0 & \tau & 0 & 0 \\
			0 & \tau & 0 & 0 & 0 & 0 & \tau & 0 \\
			0 & 0 & \tau & 0 & 0 & 0 & 0 & \tau \\
			0 & 0 & 0 & \tau & 0 & 0 & 0 & 0 \\
			0 & 0 & 0 & 0 & \frac{1+\tau}{2} & 0 & 0 & \frac{\sqrt{1-\tau^{2}}}{2} \\
			\tau & 0 & 0 & 0 & 0 & \tau & 0 & 0 \\
			0 & \tau & 0 & 0 & 0 & 0 & \tau & 0 \\
			0 & 0 & \tau & 0 & \frac{\sqrt{1-\tau^{2}}}{2} & 0 & 0 & \frac{1+\tau}{2}
		\end{pmatrix}
	\end{equation*}
	with $0<\tau<1$ and $\ket{\xi}=\frac{1}{\sqrt{2}}(\ket{00}+\ket{11})$. We construct a $(3,2)$-POVM with the Hermitian basis operators given by Pauli matrices
	$G_{11}=\dfrac{1}{\sqrt{2}}\begin{pmatrix}
		0 & 1\\
		1 & 0\\
	\end{pmatrix}$,~
	$G_{21}=\dfrac{1}{\sqrt{2}}\begin{pmatrix}
		0 & -\mathrm{i}\\
		\mathrm{i} & 0\\
	\end{pmatrix}$,~
	$G_{31}=\dfrac{1}{\sqrt{2}}\begin{pmatrix}
		1 & 0\\
		0 & -1\\
	\end{pmatrix}$,
	and a $(5,4)$-POVM with the Hermitian basis operators $G_{\alpha,k}$ given in Appendix A. It is verified that the corresponding parameters $x_1=\frac{1}{2}+(\sqrt{2}+1)^{2}t_{1}^{2}$ with $t_{1}\in [-0.2929,0.2929]$ and $x_{2}=\frac{1}{4}+27t_{2}^{2}$ with $t_{2}\in [-0.0572,0.0680]$. We take $\rho_{q}=\rho(0.9,q)$. In Fig.1, the red curve is the lower bound of ${\rm SN}(\rho_{q})-1$ from our Theorem 1 based on $(3,2)$-POVM and $(5,4)$-POVM with $t_{1}=t_{2}=0.01$. In other words, the red curve is the value of $\frac{K}{R}(\|\mathcal{P}(\rho_{q})\|_{\rm tr}-\frac{L}{K})$ based on $(3,2)$-POVM and $(5,4)$-POVM with $t_{1}=t_{2}=0.01$. The green curve is the lower bound of ${\rm SN}(\rho_{q})-1$ from Theorem 1 in Ref.\cite{ZWang} based on GSIC POVMs with $a_{1}=0.1277$ and $a_{2}=0.04984$. In other words, the green curve is the value of $\frac{K}{R}(\|\mathcal{P}(\rho_{q})\|_{\rm tr}-\frac{L}{K})$ based on GSIC POVMs with $a_{1}=0.1277$ and $a_{2}=0.04984$. In fact, the red curve shows that $\rho(0.9,q)$ is entangled for $0.42115\leq q\leq 1$, and the green curve shows that Theorem 1 in Ref.\cite{ZWang} fails to detect the entanglement of $\rho(0.9,q)$. This indicates that our Schmidt number criterion detect more entangled states than the GSIC Schmidt number criterion. In other words, our Schmidt number criterion detect more quantum states are having Schmidt number strictly greater than 1. Thus, our criterion is more efficient in detecting the Schmidt number than the GSIC criterion introduced in Ref.\cite{ZWang}.
	\begin{figure}[t]
		\includegraphics[width=10cm]{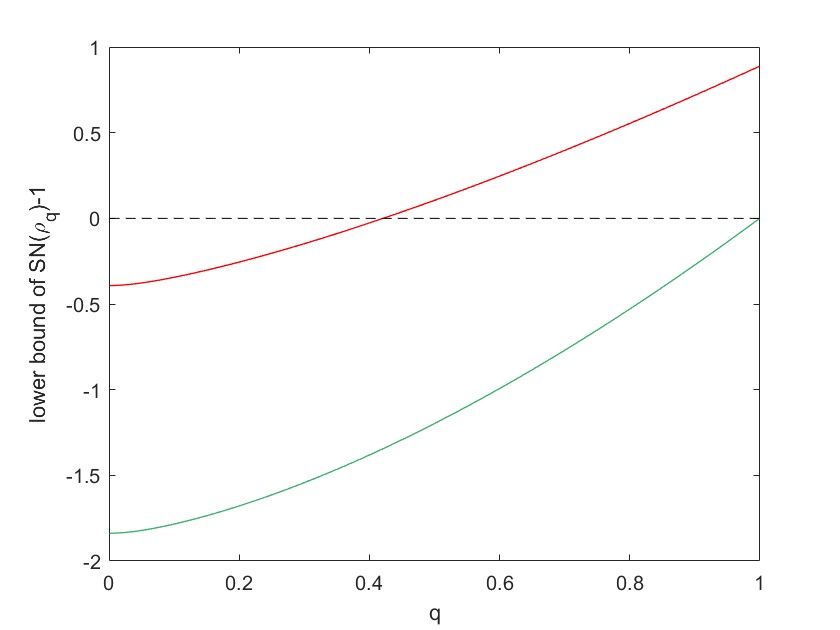}
		\caption{Red curve is the lower bound of ${\rm SN}(\rho_{q})-1$ from our Theorem 1 based on $(3,2)$-POVM and $(5,4)$-POVM. Green curve is the lower bound of ${\rm SN}(\rho_{q})-1$ from Theorem 1 in Ref.\cite{ZWang} based on GSIC POVMs with $a_{1}=0.1277$ and $a_{2}=0.04984$.}
	\end{figure}
\end{example}

\begin{example}
	Ref.\cite{Weilenmann} introduces the following mixed two-ququart state
	\begin{equation*}
		 \rho=\dfrac{1}{2}\ket{\phi^{+}_{3}}\bra{\phi^{+}_{3}}+\dfrac{1}{4}(\ket{23}+\ket{32})(\bra{23}+\bra{32})
	\end{equation*}
	where $\ket{\phi^{+}_{3}}=\dfrac{1}{\sqrt{3}}(\ket{00}+\ket{11}+\ket{33})$. Now, we consider the mixture of $\rho$ and pure state $\ket{\xi}=\dfrac{1}{5}\ket{00}+\dfrac{1}{5}\ket{11}+\dfrac{\sqrt{23}}{5}\ket{22}$,
	\begin{equation*}
		\rho_{p}=p\rho+(1-p)\ket{\xi}\bra{\xi}
	\end{equation*}
	Take the $(N,M)$-POVM in Corollary 1 to be $(5,4)$-POVM with the Hermitian basis operator $G_{\alpha, k}$ given in Appendix A.  It is verified that the parameter $x=\frac{1}{4}+27t^{2}$ with $t\in [-0.0572,0.0680]$. In Fig. 2, the red curve is the lower bound of ${\rm SN}(\rho_{p})-1$ from our Corollary 1 based on $(5,4)$-POVM with $t=0.01$, which shows that $\rho_{p}$ is a entanglement state with Schmidt numbers strictly greater than 2 for $0.5219\leq p\leq 1$. In other words, the red curve is the value of $\frac{M(M-1)}{xM^{2}-d}\left(\|\mathcal{P}(\rho_{p})\|_{\rm tr}-\frac{(d-1)(xM^{2}+d^{2})}{dM(M-1)}\right)$ based on $(5,4)$-POVM with $t=0.01$. The orange curve is the lower bound of ${\rm SN}(\rho_{p})-1$ from realignment \cite{Hulpke,Johnston}, which shows that $\rho_{p}$ is a entanglement state with Schmidt numbers strictly greater than 2 for $0.5475\leq p\leq 1$. In other words, orange curve is the value of $\|\mathcal{R}(\rho_{p})\|_{\rm tr}-1$, where $\mathcal{R}(\rho_{p})$ is realigned matrix of $\rho_{p}$. Compared with realignment criterion, our Schmidt number criterion detect more entangled states are having Schmidt number strictly greater than 2. Therefore, our criterion is more efficient in detecting the Schmidt number than realignment criterion.
	\begin{figure}[t]
		\includegraphics[width=10cm]{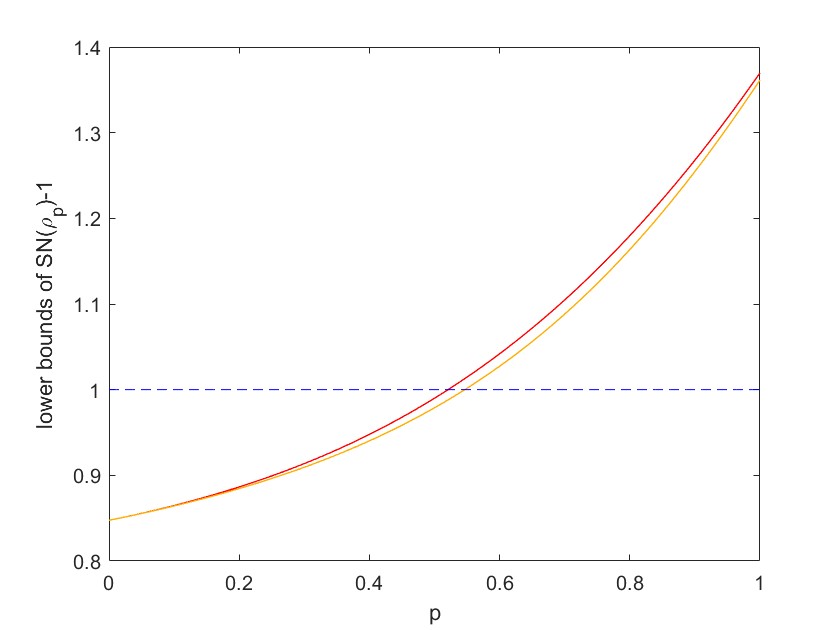}
		\caption{Red curve is the lower bound of ${\rm SN}(\rho_{p})-1$ from our Corollary 1 based on $(5,4)$-POVM with $t=0.01$. Orange curve is the lower bound of ${\rm SN}(\rho_{p})-1$ from realignment.}
	\end{figure}
\end{example}

\begin{example}
	Consider the following isotropic state \cite{isostate},
	\begin{equation*}    	
		\rho_{v}=v\ket{\Psi^{+}}\bra{\Psi^{+}}+(1-v)\dfrac{I_{d}}{d^{2}},
	\end{equation*}
	where $I$ is the identity operator on ${H\otimes H}$, $\ket{\Psi^{+}}=\dfrac{1}{\sqrt{d}}\sum\limits_{i=1}^{d}\ket{ii}$ and $0<v<1$. It is easy to know that
	\begin{equation*}
		\|\mathcal{P}(\rho_{v})\|_{\rm tr}=\dfrac{N}{M}+\dfrac{vN(M^{2}x-d)}{dM}=\dfrac{(d-1)(M^{2}x+d^{2})}{dM(M-1)}+\dfrac{N(M^{2}x-d)}{dM}\left(v-\dfrac{1}{d+1}\right)
	\end{equation*}
	by direct calculation \cite{Haofan}. In Ref.\cite{Terhal}, the authors actually use the fidelity criterion (i.e. Lemma 1 in Ref.\cite{Terhal}) to detect the Schmidt number of state $\rho_{v}$. The fidelity criterion says that the Schmidt number of state $\rho_v$ is at least $r+1$ if $v$ greater than the critical value $v_{\rm opt}=\frac{rd-1}{d^2-1}$. For any value of $v>v_{\rm opt}$ it holds that
	\begin{flalign*}
		& \begin{array}{ll}
			\|\mathcal{P}(\rho_{v})\|_{\rm tr}
			& >\dfrac{(d-1)(M^{2}x+d^{2})}{dM(M-1)}+\dfrac{N(M^{2}x-d)}{dM}\left(\dfrac{rd-1}{d^2-1}-\dfrac{1}{d+1}\right)\\
			& =\dfrac{(d-1)(M^{2}x+d^{2})}{dM(M-1)}+(r-1)\dfrac{M^{2}x-d}{M(M-1)}.
		\end{array}&
	\end{flalign*}
    From this and Corollary 1, we know that the Schmidt number of state $\rho_v$ is at least $r+1$. Thus, our criterion must not be weaker than the fidelity criterion for the state $\rho_v$.
\end{example}

\begin{example}
	Consider the mixture of the bound entangled state considered in Ref.\cite{state},
	\begin{equation*}
		\rho_{\tau}=\dfrac{1}{1+8\tau}
		\begin{pmatrix}
			\tau & 0 & 0 & 0 & \tau & 0 & 0 & 0 & \tau\\
			0 & \tau & 0 & 0 & 0 & 0 & 0 & 0 & 0\\
			0 & 0 & \tau & 0 & 0 & 0 & 0 & 0 & 0\\
			0 & 0 & 0 & \tau & 0 & 0 & 0 & 0 & 0\\
			\tau & 0 & 0 & 0 & \tau & 0 & 0 & 0 & \tau\\
			0 & 0 & 0 & 0 & 0 & \tau & 0 & 0 & 0\\
			0 & 0 & 0 & 0 & 0 & 0 & \frac{1+\tau}{2} & 0 & \frac{\sqrt{1-\tau^{2}}}{2}\\
			0 & 0 & 0 & 0 & 0 & 0 & 0 & 0 & 0\\
			\tau & 0 & 0 & 0 & \tau & 0 & \frac{\sqrt{1-\tau^{2}}}{2} & 0 & \frac{1+\tau}{2}
		\end{pmatrix}
	\end{equation*}
	and the $9\times 9$ identity matrix $I_{9}$,
	\begin{equation*}
		\rho(\tau,q)=q\rho_{\tau}+\frac{1-q}{9}I_{9}.
	\end{equation*}
    Take the $(N,M)$-POVM in Corollary 1 to be $(8,2)$-POVM with the Hermitian basis operator $G_{\alpha, k}$ given in Appendix B. It is verified that the parameter $x=\frac{3}{4}+t^{2}(\sqrt{2}+1)^{2}$ with $t\in[-0.2536,0.2536]$. Consider $\rho=\rho(\tau,0.995)$. In Fig. 3, the red curve is the lower bound of ${\rm SN}(\rho)-1$ from our Corollary 1 based on $(8,2)$-POVM with $t=0.01$. In other words, the red curve is the value of $\frac{M(M-1)}{xM^{2}-d}\left(\|\mathcal{P}(\rho)\|_{\rm tr}-\frac{(d-1)(xM^{2}+d^{2})}{dM(M-1)}\right)$ based on $(8,2)$-POVM with $t=0.01$. The green curve is is the lower bound of ${\rm SN}(\rho)-1$ from Theorem 1 in Ref.\cite{ZWang} based on GSIC POVMs with $a=0.04984$. In other words, the green curve is the value of $\frac{d(d^{2}-1)}{ad^{3}-1}\left(\|\mathcal{P}(\rho)\|_{\rm tr}-\frac{ad^{2}+1}{d(d+1)}\right)$ based on GSIC POVM with $a=0.04984$. The purple curve is the lower bound of ${\rm SN}(\rho)-1$ from Result 1 in Ref.\cite{Tavakoli}. In other words, the purple curve is the value of $d(d+1)\left(\|\mathcal{P}(\rho)\|_{\rm tr}-\frac{2}{d(d+1)}\right)$ based on SIC POVM. The orange curve is the lower bound of ${\rm SN}(\rho)-1$ from realignment \cite{Hulpke,Johnston}. It is not difficult to see that Fig.3 is also a supplement to Fig.4 in Ref.\cite{Shang}. Obviously, our Schmidt number criterion detect more entangled states than other critera mentioned above. Thus, our criterion is more efficient in detecting the Schmidt number than other critera mentioned above.
    \begin{figure}[t]
    	\includegraphics[width=10cm]{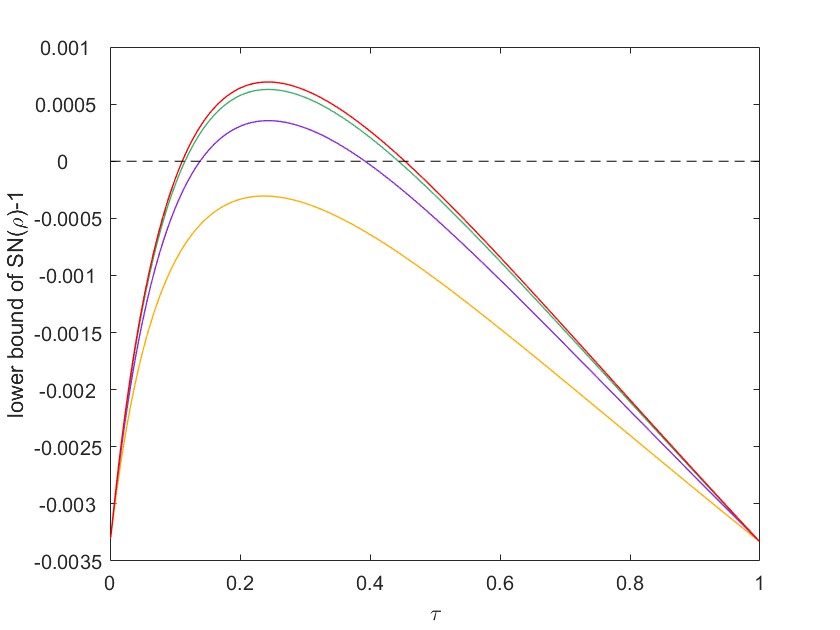}
    	\caption{Red curve is the lower bound of ${\rm SN}(\rho)-1$ from our Corollary 1 based on $(8,2)$-POVM with $t=0.01$. Green curve is is the lower bound of ${\rm SN}(\rho)-1$ from Theorem 1 in Ref.\cite{ZWang} based on GSIC POVMs with $a=0.04984$. Purple curve is the lower bound of ${\rm SN}(\rho)-1$ from Result 1 in Ref.\cite{Tavakoli}. Orange curve is the lower bound of ${\rm SN}(\rho)-1$ from realignment.}
    \end{figure}
\end{example}

\section{Conclusions and Discussions}
We have provided a criterion for detecting the Schmidt numbers of bipartite states based on symmetric measurements. As symmetric measurements cover GSIC POVM and MUBs, our criterion includes the GSIC and MUBs based criteria as particular cases. Through our examples, We have illustrated that our criterion is more effective than and superior to the GSIC criterion, the fidelity criterion and the realignment criterion. Moreover, we have obtained a class of symmetric measurement-induced lower bounds of concurrence for heterogeneous systems, which includes the one given in Ref.\cite{Haofan} as particular cases. It is possible that our results can be used to determining Schmidt numbers in experiments, which is consistent with the discussion in Ref.\cite{Tavakoli}. Our results may highlight further investigations on the Schmidt number criteria based on other quantum measurements or conical 2-designs. In addition, whether other norms can be used to obtain Schmidt-number criteria may also be a topic worth studying.

    \bigskip
    \noindent{\bf Acknowlegements}
This work is supported by the National Natural Science Foundation of China (NSFC) under Grant No. 12171044; the specific research fund of the Innovation Platform for Academicians of Hainan Province.

\begin{appendices}
\section{The Hermitian basis operators used to construct $(5,4)$-POVM  in Example 1}
In Example 1, we used $(5,4)$-POVM with the Hermitian basis operators $G_{\alpha, k}$ given by the following general Gell-Mann matrices \cite{Kalev1}:
$$G_{11}=\dfrac{1}{\sqrt{2}}\begin{pmatrix}
    	0 & -\mathrm{i} & 0 & 0 \\
    	\mathrm{i} & 0 & 0 & 0 \\
    	0 & 0 & 0 & 0 \\
    	0 & 0 & 0 & 0
    \end{pmatrix},~~~
G_{12}=\dfrac{1}{\sqrt{2}}\begin{pmatrix}
    	0 & 0 & -\mathrm{i} & 0 \\
    	0 & 0 & 0 & 0 \\
    	\mathrm{i} & 0 & 0 & 0 \\
    	0 & 0 & 0 & 0
    \end{pmatrix},~~~
G_{13}=\dfrac{1}{\sqrt{2}}\begin{pmatrix}
    	0 & 0 & 0 & -\mathrm{i} \\
    	0 & 0 & 0 & 0 \\
    	0 & 0 & 0 & 0 \\
    	\mathrm{i} & 0 & 0 & 0
    \end{pmatrix},
$$

$$G_{21}=\dfrac{1}{\sqrt{2}}\begin{pmatrix}
    	0 & 1 & 0 & 0 \\
    	1 & 0 & 0 & 0 \\
    	0 & 0 & 0 & 0 \\
    	0 & 0 & 0 & 0
    \end{pmatrix},~~~
    G_{22}=\dfrac{1}{\sqrt{2}}\begin{pmatrix}
    	0 & 0 & 0 & 0 \\
    	0 & 0 & -\mathrm{i} & 0 \\
    	0 & \mathrm{i} & 0 & 0 \\
    	0 & 0 & 0 & 0
    \end{pmatrix},~~~
    G_{23}=\dfrac{1}{\sqrt{2}}\begin{pmatrix}
    	0 & 0 & 0 & 0 \\
    	0 & 0 & 0 & -\mathrm{i} \\
    	0 & 0 & 0 & 0 \\
    	0 & \mathrm{i} & 0 & 0
    \end{pmatrix},
    $$

    $$G_{31}=\dfrac{1}{\sqrt{2}}\begin{pmatrix}
    	0 & 0 & 1 & 0 \\
    	0 & 0 & 0 & 0 \\
    	1 & 0 & 0 & 0 \\
    	0 & 0 & 0 & 0
    \end{pmatrix},~~~
    G_{32}=\dfrac{1}{\sqrt{2}}\begin{pmatrix}
    	0 & 0 & 0 & 0 \\
    	0 & 0 & 1 & 0 \\
    	0 & 1 & 0 & 0 \\
    	0 & 0 & 0 & 0
    \end{pmatrix},~~~
    G_{33}=\dfrac{1}{\sqrt{2}}\begin{pmatrix}
    	0 & 0 & 0 & 0 \\
    	0 & 0 & 0 & 0 \\
    	0 & 0 & 0 & -\mathrm{i} \\
    	0 & 0 & \mathrm{i} & 0
    \end{pmatrix},
    $$

    $$G_{41}=\dfrac{1}{\sqrt{2}}\begin{pmatrix}
    	0 & 0 & 0 & 1 \\
    	0 & 0 & 0 & 0 \\
    	0 & 0 & 0 & 0 \\
    	1 & 0 & 0 & 0
    \end{pmatrix},~~~
    G_{42}=\dfrac{1}{\sqrt{2}}\begin{pmatrix}
    	0 & 0 & 0 & 0 \\
    	0 & 0 & 0 & 1 \\
    	0 & 0 & 0 & 0 \\
    	0 & 1 & 0 & 0
    \end{pmatrix},~~~
    G_{43}=\dfrac{1}{\sqrt{2}}\begin{pmatrix}
    	0 & 0 & 0 & 0 \\
    	0 & 0 & 0 & 0 \\
    	0 & 0 & 0 & 1 \\
    	0 & 0 & 1 & 0
    \end{pmatrix},$$

    $$G_{51}=\dfrac{1}{\sqrt{2}}\begin{pmatrix}
    	1 & 0 & 0 & 0 \\
    	0 & -1 & 0 & 0 \\
    	0 & 0 & 0 & 0 \\
    	0 & 0 & 0 & 0
    \end{pmatrix},~~~
    G_{52}=\dfrac{1}{\sqrt{6}}\begin{pmatrix}
    	1 & 0 & 0 & 0 \\
    	0 & 1 & 0 & 0 \\
    	0 & 0 & -2 & 0 \\
    	0 & 0 & 0 & 0
    \end{pmatrix},~~~
    G_{53}=\dfrac{1}{2\sqrt{3}}\begin{pmatrix}
    	1 & 0 & 0 & 0 \\
    	0 & 1 & 0 & 0 \\
    	0 & 0 & 1 & 0 \\
    	0 & 0 & 0 & -3
    \end{pmatrix}.
    $$

\section{The Hermitian basis operators used to construct $(8,2)$-POVM in Example 3}
In Example 3, we used $(8,2)$-POVM with the Hermitian basis operators $G_{\alpha, k}$ given by the following Gell-Mann matrices,
    $$G_{11}=\dfrac{1}{\sqrt{2}}\begin{pmatrix}
    	0 & 1 & 0\\
    	1 & 0 & 0\\
    	0 & 0 & 0
    \end{pmatrix},~~~
    G_{21}=\dfrac{1}{\sqrt{2}}\begin{pmatrix}
    	0 & -\mathrm{i} & 0\\
    	\mathrm{i} & 0 & 0\\
    	0 & 0 & 0
    \end{pmatrix},~~~
    G_{31}=\dfrac{1}{\sqrt{2}}\begin{pmatrix}
    	0 & 0 & 1\\
    	0 & 0 & 0\\
    	1 & 0 & 0
    \end{pmatrix},
$$
$$G_{41}=\dfrac{1}{\sqrt{2}}\begin{pmatrix}
    	0 & 0 & -\mathrm{i}\\
    	0 & 0 & 0\\
    	\mathrm{i} & 0 & 0
    \end{pmatrix},~~~
    G_{51}=\dfrac{1}{\sqrt{2}}\begin{pmatrix}
    	0 & 0 & 0\\
    	0 & 0 & 1\\
    	0 & 1 & 0
    \end{pmatrix},~~~
    G_{61}=\dfrac{1}{\sqrt{2}}\begin{pmatrix}
    	0 & 0 & 0\\
    	0 & 0 & -\mathrm{i}\\
    	0 & \mathrm{i} & 0
    \end{pmatrix},
    $$
    $$G_{71}=\dfrac{1}{\sqrt{2}}\begin{pmatrix}
    	1 & 0 & 0\\
    	0 & -1 & 0\\
    	0 & 0 & 0
    \end{pmatrix},~~~
    G_{81}=\dfrac{1}{\sqrt{6}}\begin{pmatrix}
    	1 & 0 & 0\\
    	0 & 1 & 0\\
    	0 & 0 & -2
    \end{pmatrix}.
    $$

\end{appendices}
\end{document}